%% LyX 2.2.2 created this file.  For more info, see http://www.lyx.org/.
%% Do not edit unless you really know what you are doing.
\documentclass[conference, twocolumn]{ieeeconf}
\usepackage[T1]{fontenc}
\usepackage[latin9]{inputenc}
\synctex=-1
\usepackage{array}
\usepackage{verbatim}
\usepackage{booktabs}
\usepackage{units}
\usepackage{mathrsfs}
\usepackage{amsmath}
\usepackage{amssymb}
\usepackage{graphicx}
\usepackage{esint}
\usepackage{xargs}[2008/03/08]

\makeatletter

%%%%%%%%%%%%%%%%%%%%%%%%%%%%%% LyX specific LaTeX commands.
%% Because html converters don't know tabularnewline
\providecommand{\tabularnewline}{\\}

%%%%%%%%%%%%%%%%%%%%%%%%%%%%%% Textclass specific LaTeX commands.

\usepackage{amsthm}
  \newtheorem{defn}{\protect\definitionname}
\theoremstyle{definition}
\newtheorem{assumption}{Assumption}
  \newtheorem{thm}{\protect\theoremname}

%%%%%%%%%%%%%%%%%%%%%%%%%%%%%% User specified LaTeX commands.
\usepackage{cite} 
\IEEEoverridecommandlockouts
\overrideIEEEmargins   
\usepackage{algpseudocode}

\newsavebox{\ieeealgbox}

\makeatother

\providecommand{\definitionname}{Definition}
\providecommand{\theoremname}{Theorem}

\begin{document}

\title{Simultaneous State and Parameter Estimation for Second-Order Nonlinear
Systems}

\author{Rushikesh Kamalapurkar\thanks{Rushikesh Kamalapurkar is with the School of Mechanical and Aerospace
Engineering, Oklahoma State University, Stillwater, OK, USA. {\tt\small rushikesh.kamalapurkar@okstate.edu}.}}
\maketitle
\begin{abstract}
In this paper, a concurrent learning based adaptive observer is developed
for a class of second-order nonlinear time-invariant systems with
uncertain dynamics. The developed technique results in uniformly ultimately
bounded state and parameter estimation errors. As opposed to \textit{persistent}
excitation which is required for parameter convergence in traditional
adaptive control methods, the developed technique only requires excitation
over a finite time interval to achieve parameter convergence. Simulation
results in both noise-free and noisy environments are presented to
validate the design.
\end{abstract}

\section{Introduction}

\global\long\def\d{\textnormal{d}}

\global\long\def\i{\textnormal{i}}

\global\long\def\sgn{\operatorname{sgn}}

\global\long\def\proj{\operatornamewithlimits{proj}}

\global\long\def\argmin{\operatornamewithlimits{arg\:min}}

\global\long\def\r{\mathbb{R}}

\global\long\def\tr{\operatorname{tr}}

\global\long\def\Sgn{\operatorname{SGN}}

\global\long\def\k{\operatorname{K}}

\global\long\def\co{\operatorname{co}}

\global\long\def\ae{\operatorname{a.e.}}

\global\long\def\b{\operatorname{B}}

\global\long\def\n{\mathbb{N}}

\global\long\def\diag{\operatorname{diag}}

\global\long\def\rank{\operatorname{rank}}

\global\long\def\ind{\operatorname{\mathbf{1}}}

\newcommandx\fid[5][usedefault, addprefix=\global, 1=, 2=, 3=, 4=, 5=]{\tensor*[_{#3}^{#2}]{#1}{_{#5}^{#4}}}

\global\long\def\g{\operatorname{g}}

\global\long\def\eval#1{\left.#1\right|}

\global\long\def\lab#1#2{\underset{#2}{\underbrace{#1}}}

\global\long\def\zero{\operatorname{0}}

\global\long\def\e{\textnormal{e}}

\global\long\def\vecop{\operatorname{vec}}

\global\long\def\id{\operatorname{I}}

\global\long\def\re{\operatorname{re}}

\global\long\def\im{\operatorname{im}}

\global\long\def\lap#1{\mathscr{L}\left\{  #1\right\}  }

\global\long\def\dist{\operatorname{dist}}

\global\long\def\unit#1#2{\SI[per-mode=symbol]{#1}{#2}}

Owing to Increasing reliance on
automation and increasing complexity of autonomous systems, the ability
to adapt has become an indispensable feature of modern control systems.
Traditional adaptive control methods (see, e.g., \cite{SCC.Ioannou.Sun1996,SCC.Sastry.Bodson1989,SCC.Krstic.Kanellakopoulos.ea1995})
attempt to improve the tracking performance, and in general, do not
focus on parameter estimation. While accurate parameter estimation
can improve robustness and transient performance of adaptive controllers,
(see, e.g., \cite{SCC.Duarte.Narendra1989,SCC.Krstic.Kokotovic.ea1993,SCC.Chowdhary.Johnson2011a}),
parameter convergence typically requires restrictive assumptions such
as persistence of excitation. An excitation signal is often added
to the controller to ensure persistence of excitation; however, the
added signal can cause mechanical fatigue and compromise the tracking
performance.

Parameter convergence can be achieved under a finite excitation condition
using data-driven methods such as concurrent learning (see, e.g.,
\cite{SCC.Chowdhary.Johnson2011a,SCC.Chowdhary.Yucelen.ea2013,SCC.Kersting.Buss2014}),
where the parameters are estimated by storing data during time-intervals
when the system is excited, and then utilizing the stored data to
drive adaptation when excitation is unavailable. Concurrent learning
has been shown to be an effective tool for adaptive control (see,
e.g., \cite{SCC.Chowdhary.Johnson2011a,SCC.Chowdhary.Muehlegg.ea2013,SCC.Chowdhary.Yucelen.ea2013,SCC.Kersting.Buss2014})
and adaptive estimation (see, e.g., \cite{SCC.Modares.Lewis.ea2014,SCC.Kamalapurkar.Klotz.ea2014a,SCC.Luo.Wu.ea2014,SCC.Kamalapurkar.Walters.ea2016,SCC.Bian.Jiang2016,SCC.Kamalapurkar.Rosenfeld.ea2016}),
however, concurrent learning typically requires full state feedback
along with accurate numerical estimates of the state-derivative.

Novel concurrent learning techniques that can be implemented using
full state measurements but without numerical estimates of the state-derivative
are developed in \cite{SCC.Kamalapurkar.Reish.ea2017} and \cite{SCC.Parikh.Kamalapurkar.easubmitteda};
however, since full state feedback is typically not available, the
development of an output-feedback concurrent learning framework is
well-motivated. An output feedback concurrent learning technique is
developed for second-order linear systems in \cite{SCC.Kamalapurkar2017};
however, the implementation critically depends on the certainty equivalence
principle, and hence, is not directly transferable to nonlinear systems.

In this paper, an output feedback concurrent learning method is developed
for simultaneous state and parameter estimation in second-order uncertain
nonlinear systems. An adaptive state-observer is utilized to generate
estimates of the state from input-output data. The estimated state
trajectories along with the known inputs are then utilized in a novel
data-driven parameter estimation scheme to achieve simultaneous state
and parameter estimation. Convergence of the state estimates and the
parameter estimates to a small neighborhood of the origin is established
under a finite (as opposed to \textit{persistent}) excitation condition.

The paper is organized as follows. An integral error system that facilitates
parameter estimation is developed in Section \ref{sec:Error-System-for}.
Section \ref{sec:Velocity-estimator-design} is dedicated to the design
of a robust state observer. Section \ref{sec:Parameter-Estimator-Design}
details the developed parameter estimator. Section \ref{sec:Purging}
details the algorithm for selection and storage of the data that is
used to implement concurrent learning. Section \ref{sec:Stability-Analysis}
is dedicated to a Lyapunov-based analysis of the developed technique.
Section \ref{sec:Simulation} demonstrates the efficacy of the developed
method via a numerical simulation and Section \ref{sec:Conclusion}
concludes the paper.

\section{\label{sec:Error-System-for}Error System for Estimation}

Consider a second order nonlinear system of the form\footnote{For $a\in\r,$ the notation $\r_{\geq a}$ denotes the interval $\left[a,\infty\right)$
and the notation $\r_{>a}$ denotes the interval $\left(a,\infty\right)$.}
\begin{align}
\dot{p}\left(t\right) & =q\left(t\right),\nonumber \\
\dot{q}\left(t\right) & =f\left(x\left(t\right),u\left(t\right)\right),\nonumber \\
y\left(t\right) & =p\left(t\right),\label{eq:Linear System}
\end{align}
where $p:\mathbb{R}_{\geq T_{0}}\to\r^{n}$ and $q:\r_{\geq T_{0}}\to\r^{n}$
denote the generalized position states and the generalized velocity
states, respectively, $x\triangleq\begin{bmatrix}p^{T} & q^{T}\end{bmatrix}^{T}$
is the system state, $f:\r^{n}\times\r^{m}\to\r^{n}$ is locally Lipschitz
continuous, and $y:\r_{\geq T_{0}}\to\r^{n}$ denotes the output.
The model $f$ is comprised of a known nominal part and an unknown
part, i.e., $f=f^{o}+g$, where $f^{o}:\r^{n}\times\r^{m}\to\r^{n}$
is known and locally Lipschitz and $g:\r^{n}\times\r^{m}\to\r^{n}$
is unknown and locally Lipschitz. The objective is to design an adaptive
estimator to identify the unknown function $g$, online, using input-output
measurements. It is assumed that the system is controlled using a
stabilizing input, i.e., $x,\:u\in\mathcal{L}_{\infty}$. It is further
assumed that the signal $p$, and $u$ are available for feedback.
Systems of the form (\ref{eq:Linear System}) encompass second-order
linear systems and Euler-Lagrange models, and hence, represent a wide
class of physical plants, including but not limited to robotic manipulators
and autonomous ground, aerial, and underwater vehicles. 

Given a compact set $\chi\subset\r^{n}\times\r^{m}$, and a constant
$\overline{\epsilon}$, the unknown function $g$ can be approximated
using basis functions as $g\left(x,u\right)=\theta^{T}\sigma\left(x,u\right)+\epsilon\left(x,u\right)$,
where $\sigma:\r^{n}\times\r^{m}\to\r^{p}$ and $\epsilon:\r^{n}\times\r^{m}\to\r^{n}$
denote the basis vector and the approximation error, respectively,
$\theta\in\r^{p\times n}$ is a constant matrix of unknown parameters,
and there exist $\overline{\sigma},\overline{\theta}>0$ such that
$\sup_{\left(x,u\right)\in\chi}\sigma\left(x,u\right)<\overline{\sigma}$,
$\sup_{\left(x,u\right)\in\chi}\nabla\sigma\left(x,u\right)<\overline{\sigma}$,
$\sup_{\left(x,u\right)\in\chi}\epsilon\left(x,u\right)<\overline{\epsilon}$,
$\sup_{\left(x,u\right)\in\chi}\nabla\epsilon\left(x,u\right)<\overline{\epsilon}$,
and $\left\Vert \theta\right\Vert <\overline{\theta}$. To obtain
an error signal for parameter identification, the system in (\ref{eq:Linear System})
is expressed in the form
\begin{equation}
\ddot{q}\left(t\right)=f^{o}\left(x\left(t\right),u\left(t\right)\right)+\theta^{T}\sigma\left(x\left(t\right),u\left(t\right)\right)+\epsilon\left(x\left(t\right),u\left(t\right)\right).\label{eq:Pre Integral Form}
\end{equation}
Integrating (\ref{eq:Pre Integral Form}) over the interval $\left[t-\tau_{1},t\right]$
for some constant $\tau_{1}\in\r_{>0}$ and then over the interval
$\left[t-\tau_{2},t\right]$ for some constant $\tau_{2}\in\r_{>0}$,
\begin{equation}
\intop_{t-\tau_{2}}^{t}\left(q\left(\lambda\right)-q\left(\lambda-\tau_{1}\right)\right)\d\lambda=\mathcal{I}f^{o}\left(t\right)+\theta^{T}\mathcal{I}\sigma\left(t\right)+\mathcal{I}\epsilon\left(t\right),\label{eq:Double Integral Form}
\end{equation}
where $\mathcal{I}$ denotes the integral operator $f\mapsto\intop_{t-\tau_{2}}^{t}\intop_{\lambda-\tau_{1}}^{\lambda}f\left(x\left(\tau\right),u\left(\tau\right)\right)\d\tau\d\lambda$.
Using the Fundamental Theorem of Calculus and the fact that $q\left(t\right)=\dot{p}\left(t\right)$,
the expression in (\ref{eq:Derivative Free Form}) can be rearranged
to form the affine system
\begin{multline}
P\left(t\right)=F\left(t\right)+\text{\ensuremath{\theta}}^{T}G\left(t\right)+E\left(t\right),\:\forall t\in\r_{\geq T_{0}}\label{eq:Derivative Free Form}
\end{multline}
where 
\begin{equation}
P\left(t\right)\triangleq\begin{cases}
\begin{gathered}p\left(t\!-\!\tau_{2}\!-\!\tau_{1}\right)\!-\!p\left(t\!-\!\tau_{1}\right)\\
\!+p\left(t\right)\!-\!p\left(t\!-\!\tau_{2}\right),
\end{gathered}
 & t\!\in\!\left[T_{0}\!+\!\tau_{1}\!+\!\tau_{2},\infty\right),\\
0 & t<T_{0}+\tau_{1}+\tau_{2}.
\end{cases}\label{eq:P}
\end{equation}
\begin{equation}
F\left(t\right)\triangleq\begin{cases}
\mathcal{I}f^{o}\left(t\right), & t\in\left[T_{0}+\tau_{1}+\tau_{2},\infty\right),\\
0, & t<T_{0}+\tau_{1}+\tau_{2},
\end{cases}\label{eq:F}
\end{equation}
 
\begin{equation}
G\left(t\right)\!\triangleq\!\begin{cases}
\!\mathcal{I}\sigma\left(t\right), & t\!\in\!\left[T_{0}\!+\!\tau_{1}\!+\!\tau_{2},\infty\!\right),\\
0 & t<T_{0}+\tau_{1}+\tau_{2},
\end{cases}\label{eq:G}
\end{equation}
and 
\begin{equation}
E\left(t\right)\triangleq\begin{cases}
\mathcal{I}\epsilon\left(t\right), & t\in\left[T_{0}+\tau_{1}+\tau_{2},\infty\right),\\
0 & t<T_{0}+\tau_{1}+\tau_{2}.
\end{cases}\label{eq:U}
\end{equation}
The affine relationship in (\ref{eq:Derivative Free Form}) is valid
for all $t\in\r_{\geq T_{0}}$; however, it provides useful information
about the vector $\theta$ only after $t\geq T_{0}+\tau_{1}+\tau_{2}$. 

The knowledge of the generalized velocity, $q$, is required to compute
the matrices $F$ and $G$. In the following, a robust adaptive velocity
estimator is developed to generate estimates of the generalized velocity.

\section{\label{sec:Velocity-estimator-design}Velocity estimator design}

To generate estimates of the generalized velocity, a velocity estimator
inspired by \cite{SCC.Dinh.Kamalapurkar.ea2014} is developed. The
estimator is given by
\begin{align}
\dot{\hat{p}} & =\hat{q}\nonumber \\
\dot{\hat{q}} & =f^{o}\left(\hat{x},u\right)+\hat{\theta}^{T}\sigma\left(\hat{x},u\right)+\nu,\label{eq:Estimator}
\end{align}
where $\hat{x}$, $\hat{p}$, $\hat{q},$ and $\hat{\theta}$ are
estimates of $x,$ $p,$ $q,$ and $\theta$, respectively, and $\nu$
is a feedback term designed in the following.

To facilitate the design of $\nu$, let $\tilde{p}=p-\hat{p}$, $\tilde{q}=q-\hat{q}$,
$\tilde{\theta}=\theta-\hat{\theta}$, and let
\begin{equation}
r\left(t\right)=\dot{\tilde{p}}\left(t\right)+\alpha\tilde{p}\left(t\right)+\eta\left(t\right),\label{eq:r}
\end{equation}
where the signal $\eta$ is added to compensate for the fact that
the generalized velocity state, $q$, is not measurable. Based on
the subsequent stability analysis, the signal $\eta$ is designed
as the output of the dynamic filter 
\begin{align}
\dot{\eta}\left(t\right) & =-\beta\eta\left(t\right)-kr\left(t\right)-\alpha\tilde{q}\left(t\right),\quad\eta\left(T_{0}\right)=0,\label{eq:eta Update}
\end{align}
where $\alpha,$ $k,$ and $\beta$ are positive constants and the
feedback component $\nu$ is designed as 
\begin{equation}
\nu\left(t\right)=\alpha^{2}\tilde{p}\left(t\right)-\left(k+\alpha+\beta\right)\eta\left(t\right).\label{eq:Feedback}
\end{equation}
The design of the signals $\eta$ and $\nu$ to estimate the state
from output measurements is inspired by the $p-$filter \cite{SCC.Xian.Queiroz.ea2004}.
Using the fact that $\tilde{p}\left(T_{0}\right)=0$, the signal $\eta$
can be implemented via the integral form
\begin{equation}
\eta\left(t\right)=-\intop_{T_{0}}^{t}\left(\beta+k\right)\eta\left(\tau\right)\d\tau-\intop_{T_{0}}^{t}k\alpha\tilde{p}\left(\tau\right)\d\tau-\left(k+\alpha\right)\tilde{p}\left(t\right).\label{eq:IntegralUpdateEta}
\end{equation}

The affine error system in (\ref{eq:Derivative Free Form}) motivates
the adaptive estimation scheme that follows. The design is inspired
by the \textit{concurrent learning} technique \cite{SCC.Chowdhary2010}.
Concurrent learning enables parameter convergence in adaptive control
by using stored data to update the parameter estimates. Traditionally,
adaptive control methods guarantee parameter convergence only if the
appropriate PE conditions are met \cite[Chapter 4]{SCC.Ioannou.Sun1996}.
Concurrent learning uses stored data to soften the PE condition to
an excitation condition over a finite time-interval. Concurrent learning
methods such as \cite{SCC.Chowdhary.Johnson2011a} and \cite{SCC.Kersting.Buss2014}
require numerical differentiation of the system state, and concurrent
learning techniques such as \cite{SCC.Parikh.Kamalapurkar.easubmitteda}
and \cite{SCC.Kamalapurkar.Reish.ea2017} require full state measurements.
In the following, a concurrent learning method that utilizes only
the output measurements is developed. 

\section{\label{sec:Parameter-Estimator-Design}Parameter Estimator Design}

To obtain output-feedback concurrent learning update law for the parameter
estimates, a history stack, denoted by $\mathcal{H}$, is utilized.
The history stack is a set of ordered pairs $\left\{ \left(P_{i},\hat{F}_{i},\hat{G}_{i}\right)\right\} _{i=1}^{M}$
such that 
\begin{equation}
P_{i}=\hat{F}_{i}+\text{\ensuremath{\theta}}^{T}\hat{G}_{i}+\mathcal{E}_{i},\:\forall i\in\left\{ 1,\cdots,M\right\} ,\label{eq:History Stack Compatibility}
\end{equation}
where $\mathcal{E}_{i}$ is a constant matrix. If a history stack
that satisfies (\ref{eq:History Stack Compatibility}) is not available
a priori, it is recorded online, based on the relationship in (\ref{eq:Derivative Free Form}),
by selecting an increasing set of time-instances $\left\{ t_{i}\right\} _{i=1}^{M}$
and letting 
\begin{gather}
P_{i}=P\left(t_{i}\right),\quad\hat{F}_{i}=\hat{F}\left(t_{i}\right),\quad\hat{G}_{i}=\hat{G}\left(t_{i}\right),\label{eq:Online Recording}
\end{gather}
where 
\begin{equation}
\hat{F}\left(t\right)\triangleq\begin{cases}
\hat{\mathcal{I}}f^{o}\left(t\right), & t\in\left[T_{0}+\tau_{1}+\tau_{2},\infty\right),\\
0, & t<T_{0}+\tau_{1}+\tau_{2},
\end{cases}\label{eq:F-1}
\end{equation}
 
\begin{equation}
\hat{G}\left(t\right)\!\triangleq\!\begin{cases}
\hat{\mathcal{I}}\sigma\left(t\right) & t\!\in\!\left[T_{0}\!+\!\tau_{1}\!+\!\tau_{2},\infty\!\right),\\
0 & t<T_{0}+\tau_{1}+\tau_{2},
\end{cases}\label{eq:G-1}
\end{equation}
where $\mathcal{\hat{I}}$ denote the operator $f\mapsto\intop_{t-\tau_{2}}^{t}\intop_{\lambda-\tau_{1}}^{\lambda}f\left(\hat{x}\left(\tau\right),u\left(\tau\right)\right)\d\tau\d\lambda$.
In this case, the error term $\mathcal{E}_{i}$ is given by $\mathcal{E}_{i}=E\left(t_{i}\right)+F\left(t_{i}\right)-\hat{F}\left(t_{i}\right)+\theta^{T}\left(G\left(t_{i}\right)-\hat{G}\left(t_{i}\right)\right).$
Let $\left[t_{1},t_{2}\right)$ be an interval over which the history
stack was recorded. Provided the states and the state estimates remain
within a compact set $\chi$ over $I\triangleq\left[t_{1}-\tau_{1}-\tau_{2},t_{2}\right)$,
the error terms can be bounded as
\begin{equation}
\left\Vert \mathcal{E}_{i}\right\Vert \leq L_{1}\overline{\epsilon}+L_{2}\overline{\tilde{x}}_{I},\forall i\in\left\{ 1,\cdots,M\right\} ,\label{eq:ErrorBound}
\end{equation}
where $\overline{\tilde{x}}_{I}\triangleq\max_{i\in\left\{ 1,\cdots,M\right\} }\sup_{t\in I}\left\Vert \tilde{x}\left(t\right)\right\Vert $
and $L_{1},L_{2}>0$ are constants.

The concurrent learning update law to estimate the unknown parameters
is designed as
\begin{equation}
\dot{\hat{\theta}}\left(t\right)=k_{\theta}\Gamma\left(t\right)\sum_{i=1}^{M}\hat{G}_{i}\left(P_{i}-\hat{F}_{i}-\hat{\theta}^{T}\left(t\right)\hat{G}_{i}\right)^{T},\label{eq:Theta Dynamics}
\end{equation}
where $k_{\theta}\in\r_{>0}$ is a constant adaptation gain and $\Gamma:\r_{\geq0}\to\r^{\left(2n^{2}+mn\right)\times\left(2n^{2}+mn\right)}$
is the least-squares gain updated using the update law
\begin{equation}
\dot{\Gamma}\left(t\right)=\beta_{1}\Gamma\left(t\right)-k_{\theta}\Gamma\left(t\right)\mathscr{G}\Gamma\left(t\right).\label{eq:Gamma Dynamics}
\end{equation}
where the matrix $\mathscr{G}\in\r^{p\times p}$ is defined as $\mathscr{G}\triangleq\sum_{i=1}^{M}\hat{G}_{i}\hat{G}_{i}^{T}$.
Using arguments similar to \cite[Corollary 4.3.2]{SCC.Ioannou.Sun1996},
it can be shown that provided $\lambda_{\min}\left\{ \Gamma^{-1}\left(T_{0}\right)\right\} >0$,
the least squares gain matrix satisfies 
\begin{equation}
\underline{\Gamma}\id_{p}\leq\Gamma\left(t\right)\leq\overline{\Gamma}\id_{p},\label{eq:StaFGammaBound}
\end{equation}
where $\underline{\Gamma}$ and $\overline{\Gamma}$ are positive
constants, and $\id_{n}$ denotes an $n\times n$ identity matrix.

\section{\label{sec:Purging}Purging}

The update law in (\ref{eq:Theta Dynamics}) is motivated by the fact
that if the full state were available for feedback and if the approximation
error, $\epsilon$, were zero, then using $\left[\begin{array}{ccc}
P_{1} & \cdots & P_{n}\end{array}\right]^{T}=\left[\begin{array}{ccc}
F_{1} & \cdots & F_{n}\end{array}\right]^{T}+\left[\begin{array}{ccc}
G_{1} & \cdots & G_{n}\end{array}\right]^{T}\text{\ensuremath{\theta}},$ the parameters could be estimated via the least squares estimate
$\hat{\theta}_{\textnormal{LS}}=\mathscr{G}^{-1}\left[\begin{array}{ccc}
G_{1} & \cdots & G_{n}\end{array}\right]\left[\begin{array}{ccc}
P_{1} & \cdots & P_{n}\end{array}\right]^{T}-\mathscr{G}^{-1}\left[\begin{array}{ccc}
G_{1} & \cdots & G_{n}\end{array}\right]\left[\begin{array}{ccc}
F_{1} & \cdots & F_{n}\end{array}\right]^{T}$. However, since the history stack contains the estimated terms $\hat{F}$
and $\hat{G}$, during the transient period where the state estimation
error is large, the history stack does not accurately (within the
error bound introduced by $\epsilon$) represent the system dynamics.
Hence, the history stack needs to be purged whenever better estimates
of the state are available.

Since the state estimator exponentially drives the estimation error
to a small neighborhood of the origin, a newer estimate of the state
can be assumed to be at least as good as an older estimate. A dwell
time based greedy purging algorithm is developed in this paper to
utilize newer data for estimation while preserving stability of the
estimator.

The algorithm maintains two history stacks, a main history stack and
a transient history stack, labeled $\mathcal{H}$ and $\mathcal{G}$,
respectively. As soon as the transient history stack is full and sufficient
dwell time has passed, the main history stack is emptied and the transient
history stack is copied into the main history stack. The sufficient
dwell time, denoted by $\mathcal{T}$, is determined using a Lyapunov-based
stability analysis. 

\begin{comment}
\[
\eta\left(t\right)=-\intop_{T_{0}}^{t}\left(\left(\beta+k\right)\eta\left(\tau\right)+k\alpha\tilde{p}\left(\tau\right)\right)\d\tau-\left(k+\alpha\right)\tilde{p}\left(t\right).
\]
Let 
\[
\eta_{1}\left(t\right)\triangleq\intop_{T_{0}}^{t}\left(\left(\beta+k\right)\eta\left(\tau\right)+k\alpha\tilde{p}\left(\tau\right)\right)\d\tau.
\]
Then, 
\[
\dot{\eta}_{1}=\left(\beta+k\right)\eta\left(\tau\right)+k\alpha\tilde{p}\left(\tau\right),\quad\eta_{1}\left(T_{0}\right)=0
\]

Also,

\begin{multline*}
\hat{q}\left(t\right)=\intop_{T_{0}}^{t}\left(\left(u\left(\tau\right)\varotimes\id_{n}\right)^{T}\vecop\left(\hat{B}\left(\tau\right)\right)+\nu\left(\tau\right)+\left(p\left(\tau\right)\varotimes\id_{n}\right)^{T}\left(\vecop\left(\hat{A}_{1}\left(\tau\right)\right)-\vecop\left(\dot{\hat{A}}_{2}\left(\tau\right)\right)\right)\right)\d\tau\\
+\hat{q}\left(T_{0}\right)+\left(p\left(t\right)\varotimes\id_{n}\right)^{T}\vecop\left(\hat{A}_{2}\left(t\right)\right)-\left(p\left(T_{0}\right)\varotimes\id_{n}\right)^{T}\vecop\left(\hat{A}_{2}\left(T_{0}\right)\right)
\end{multline*}
Let $\hat{q}_{1}\left(t\right)=\left(u\left(\tau\right)\varotimes\id_{n}\right)^{T}\vecop\left(\hat{B}\left(\tau\right)\right)+\nu\left(\tau\right)+\left(p\left(\tau\right)\varotimes\id_{n}\right)^{T}\left(\vecop\left(\hat{A}_{1}\left(\tau\right)\right)-\vecop\left(\dot{\hat{A}}_{2}\left(\tau\right)\right)\right)$
\end{comment}
Parameter identification in the developed framework imposes the following
requirement on the history stack $\mathcal{H}$.
\begin{defn}
A history stack $\left\{ \left(P_{i},\hat{F}_{i},\hat{G}_{i}\right)\right\} _{i=1}^{M}$
is called \textit{full rank }if there exists a constant $\underline{c}\in\r$
such that 
\begin{equation}
0<\underline{c}<\lambda_{\min}\left\{ \mathscr{G}\right\} ,\label{eq:Rank Condition}
\end{equation}
where $\lambda_{\min}\left(\cdot\right)$ denotes the minimum singular
value of a matrix.
\end{defn}
\begin{assumption}
For a given $M\in\mathbb{N}$ and $\underline{c}\in\r_{>0}$, there
exists a set of time instances $\left\{ t_{i}\right\} _{i=1}^{M}$
such that a history stack recorded using (\ref{eq:Online Recording})
is full rank.
\end{assumption}
A singular value maximization algorithm is used to select the time
instances $\left\{ t_{i}\right\} _{i=1}^{M}$. That is, a data-point
$\left(P_{j},\hat{F}_{j},\hat{G}_{j}\right)$ in the history stack
is replaced with a new data-point $\left(P^{*},\hat{F}^{*},\hat{G}^{*}\right)$,
where $\hat{F}^{*}=\hat{F}\left(t\right)$, $P^{*}=P\left(t\right)$,
and $\hat{G}^{*}=\hat{G}\left(t\right)$, for some $t$, only if 
\begin{equation}
\textnormal{s}_{\min}\!\left(\!\sum_{i\neq j}\hat{G}_{i}\hat{G}_{i}^{T}\!+\!\hat{G}_{j}\hat{G}_{j}^{T}\!\right)\!<\frac{\!\textnormal{s}_{\min}\!\left(\!\sum_{i\neq j}\hat{G}_{i}\hat{G}_{i}^{T}\!+\!\hat{G}^{*}\hat{G}^{*T}\!\right)}{\left(1+\zeta\right)},\label{eq:smax}
\end{equation}
where $\textnormal{s}_{\min}\left(\cdot\right)$ denotes the minimum
singular value of a matrix and $\zeta$ is a constant. To simplify
the analysis, new data points are assumed to be collected $\tau_{1}+\tau_{2}$
seconds after a purging event. Since the history stack is updated
using a singular value maximization algorithm, the matrix $\mathscr{G}$
is a piece-wise constant function of time. The use of singular value
maximization to update the history stack implies that once the matrix
$\mathscr{G}$ satisfies (\ref{eq:Rank Condition}), at some $t=T$,
and for some $\underline{c}$, the condition $\underline{c}<\lambda_{\min}\left(\mathscr{G}\left(t\right)\right)$
holds for all $t\geq T$. The developed purging method is summarized
in Fig. \ref{alg:CLNoXDotpurgeDwell}.
\begin{figure}
\begin{algorithmic}[1]

\State$\delta\left(T_{0}\right)\leftarrow0$, $\eta\left(T_{0}\right)\leftarrow0$

\If{$t>\delta\left(t\right)+\tau_{1}+\tau_{2}$ and a data point
is available}

\If{$\mathcal{G}$ is not full}

\State add the data point to $\mathcal{G}$

\Else

\State add the data point to $\mathcal{G}$ if (\ref{eq:smax}) holds

\EndIf

\If{ $\textnormal{s}_{\min}\left(\mathscr{G}\right)\geq\xi\eta\left(t\right)$}

\If{$t-\delta\left(t\right)\geq\mathcal{T}\left(t\right)$ }

\State $\mathcal{H}\leftarrow\mathcal{G}$ and $\mathcal{G}\leftarrow0$
\Comment{purge and replace $\mathcal{H}$}

\State $\delta\left(t\right)\leftarrow t$

\If{$\eta\left(t\right)<\textnormal{s}_{\min}\left(\mathscr{G}\right)$}

\State $\eta\left(t\right)\leftarrow\textnormal{s}_{\min}\left(\mathscr{G}\right)$

\EndIf

\EndIf

\EndIf

\EndIf

\end{algorithmic}

\caption{\label{alg:CLNoXDotpurgeDwell}Algorithm for history stack purging
with dwell time. At each time instance $t$, $\delta\left(t\right)$
stores the last time instance $\mathcal{H}$ was purged, $\eta\left(t\right)$
stores the highest minimum singular value of $\mathscr{G}$ encountered
so far, $\mathcal{T}\left(t\right)$ denotes the dwell time, and $\xi\in\left(0,1\right]$
denotes a threshold fraction.}
\end{figure}

A Lyapunov-based analysis of the parameter and the state estimation
errors is presented in the following section.

\section{\label{sec:Stability-Analysis}Stability Analysis}

Each purging event represents a discontinuous change in the system
dynamics; hence, the resulting closed-loop system is a switched system.
To facilitate the analysis of the switched system, let $\rho:\mathbb{R}_{\geq0}\to\mathbb{N}$
denote a switching signal such that $\rho\left(0\right)=1$, and $\rho\left(t\right)=i+1$,
where $i$ denotes the number of times the update $\mathcal{H}\leftarrow\mathcal{G}$
was carried out over the time interval $\left(0,t\right)$. For some
$s\in\mathbb{N}$, let $\mathcal{H}_{s}$ denotes the history stack
active during the time interval $\left\{ t\mid\rho\left(t\right)=s\right\} $),
containing the elements $\left\{ \left(P_{si},\hat{F}_{si},\hat{G}_{si}\right)\right\} _{i=1,\cdots,M}$,
and let $\mathcal{E}_{si}^{T}$ be the corresponding error term. To
simplify the notation, let $\mathscr{G}_{s}\triangleq\sum_{i=1}^{M}\hat{G}_{si}\hat{G}_{si}^{T}$,
and $Q_{s}=\sum_{i=1}^{M}\hat{G}_{si}\mathcal{E}_{si}^{T}$.

Using (\ref{eq:History Stack Compatibility}) and (\ref{eq:Theta Dynamics}),
the dynamics of the parameter estimation error can be expressed as
\begin{equation}
\dot{\tilde{\theta}}\left(t\right)=-k_{\theta}\Gamma\left(t\right)\mathscr{G}_{s}\left(t\right)\tilde{\theta}\left(t\right)-k_{\theta}\Gamma\left(t\right)Q_{s}\left(t\right).\label{eq:Theta Error Dynamics}
\end{equation}
Since the functions $\mathscr{G}_{s}:\r_{\geq T_{0}}\to\r^{p\times p}$
and $Q_{s}:\mathbb{R}_{\geq T_{0}}\to\mathbb{R}^{p\times n}$ are
piece-wise continuous, the trajectories of (\ref{eq:Theta Error Dynamics}),
and of all the subsequent error systems involving $\mathscr{G}_{s}$
and and $Q_{s}$, are defined in the sense of Carath\'{e}odory. Algorithm
\ref{alg:CLNoXDotpurgeDwell} ensures that there exists a constant
$\underline{g}>0$ such that $\lambda_{\min}\left\{ \mathscr{G}_{s}\right\} \geq\underline{g},\:\forall s\in\mathbb{N}$.

Using the dynamics in (\ref{eq:Linear System}), (\ref{eq:Estimator})
- (\ref{eq:eta Update}), and the design of the feedback component
in (\ref{eq:Feedback}), the time-derivative of the error signal $r$
is given by
\begin{multline}
\dot{r}\left(t\right)=-kr\left(t\right)+\tilde{f^{o}}\left(x,u,\hat{x}\right)+\theta^{T}\tilde{\sigma}\left(x,u,\hat{x}\right)-\tilde{\theta}^{T}\tilde{\sigma}\left(x,u,\hat{x}\right)\\
+\tilde{\theta}^{T}\sigma\left(x,u\right)+\epsilon\left(x,u\right)-\alpha^{2}\tilde{p}+\left(k+\alpha\right)\eta,\label{eq:DYNr}
\end{multline}
where $\tilde{\sigma}\left(x,u,\hat{x}\right)=\sigma\left(x,u\right)-\sigma\left(\hat{x},u\right)$
and $\tilde{f^{o}}\left(x,u,\hat{x}\right)=f\left(x,u\right)-f\left(\hat{x},u\right)$.
Since $\left(x,u\right)\mapsto f\left(x,u\right)$ and $\left(x,u\right)\mapsto\sigma\left(x,u\right)$
are locally Lipschitz, and since $t\mapsto u\left(t\right)$ is bounded,
given a compact set $\hat{\chi}\subset\r^{n}\times\r^{m}\times\r^{n}$,
there exist $L_{f},L_{\sigma}>0$ such that $\sup_{\left(x,u,\hat{x}\right)\in\hat{\chi}}\left\Vert \tilde{f^{o}}\left(x,u,\hat{x}\right)\right\Vert \leq L_{f}\left\Vert \tilde{x}\right\Vert $
and $\sup_{\left(x,u,\hat{x}\right)\in\hat{\chi}}\left\Vert \tilde{\sigma}\left(x,u,\hat{x}\right)\right\Vert \leq L_{\sigma}\left\Vert \tilde{x}\right\Vert $.

To facilitate the analysis, let $\left\{ T_{s}\in\mathbb{R}_{\geq0}\mid s\in\mathbb{N}\right\} $
be a set of switching time instances defined as $T_{s}=\left\{ t\!\mid\!\rho\left(\tau\right)\!<s+1,\forall\tau\in\left[0,t\right)\land\rho\left(\tau\right)\geq s+1,\forall\tau\in\left[t,\infty\right)\right\} .$
That is, for a given switching index $s,$ $T_{s}$ denotes the time
instance when the $\left(s+1\right)$\textsuperscript{th} subsystem
is switched on. The analysis is carried out separately over the time
intervals $\left[T_{s-1},T_{s}\right)$, $s\in\mathbb{N}$, where
$T_{1}=T_{0}+\tau_{1}+\tau_{2}+t_{M}$. Since the history stack $\mathcal{H}$
is not updated over the intervals $\left[T_{s-1},T_{s}\right)$, $s\in\mathbb{N}$,
the matrices $\mathscr{G}_{s}$ and $Q_{s}$ are constant over each
individual interval. The history stack that is active over the interval
$\left[T_{s},T_{s+1}\right)$ is denoted by $\mathcal{H}_{s}$. To
ensure boundedness of the trajectories in the interval $t\in\left[T_{0},T_{1}\right)$,
the history stack $\mathcal{H}_{1}$ is arbitrarily selected to be
full rank. The analysis is carried out over the aforementioned intervals
using the state vectors $Z\triangleq\left[\begin{array}{cccc}
\tilde{p}^{T} & r^{T} & \eta^{T} & \mbox{vec}\left(\tilde{\theta}\right)^{T}\end{array}\right]^{T}\in\mathbb{R}^{3n+np}$ and $Y\triangleq\begin{bmatrix}\tilde{p}^{T} & r^{T} & \eta^{T}\end{bmatrix}^{T}\in\r^{3n}$
as follows. 

\emph{Interval 1:} First, it is established that $Z$ is bounded over
$\left[T_{0},T_{1}\right)$, where the bound is $O\left(\left\Vert Z\left(T_{0}\right)\right\Vert +\left\Vert \sum_{i=1}^{M}\mathcal{E}_{1i}\right\Vert +\overline{\epsilon}\right)$.
Given some $\varepsilon>0$, the bound on $Z$ is utilized to select
gains such that $\left\Vert Y\left(T_{1}\right)\right\Vert <\varepsilon$.

\emph{Interval 2:} The history stack $\mathcal{H}_{2}$, which is
active over $\left[T_{1},T_{2}\right)$, is recorded over $\left[T_{0},T_{1}\right)$.
Without loss of generality, it is assumed that $\mathcal{H}_{2}$
represents the system better than $\mathcal{H}_{1}$ (which is arbitrarily
selected), that is, $\left\Vert \sum_{i=1}^{M}\mathcal{E}_{1i}\right\Vert \geq\left\Vert \sum_{i=1}^{M}\mathcal{E}_{2i}\right\Vert $.
The bound on $Z$ over $\left[T_{1},T_{2}\right)$ is then shown to
be smaller than that over $\left[T_{0},T_{1}\right)$, which utilized
to show that $\left\Vert Y\left(t\right)\right\Vert \leq\varepsilon,$
for all $t\in\left[T_{1},T_{2}\right)$. 

\emph{Interval 3: }Using (\ref{eq:ErrorBound}), the errors $\mathcal{E}_{3i}$
are shown to be $O\left(\left\Vert Y_{3i}\right\Vert +\overline{\epsilon}\right)$
where $Y_{3i}$ denotes the value of $Y$ at the time when the point
$\left(P_{3i},\hat{F}_{3i},\hat{G}_{3i}\right)$ was recorded. Using
the facts that the history stack $\mathcal{H}_{3}$, which is active
over $\left[T_{2},T_{3}\right)$, is recorded over $\left[T_{1},T_{2}\right)$
and $\left\Vert Y\left(t\right)\right\Vert \leq\varepsilon,$ for
all $t\in\left[T_{1},T_{2}\right)$, the error $\left\Vert \sum_{i=1}^{M}\mathcal{E}_{3i}\right\Vert $
is shown to be $O\left(\varepsilon+\overline{\epsilon}\right)$. If
$T_{3}=\infty$ then it is established that $\lim\sup_{t\to\infty}\left\Vert Z\left(t\right)\right\Vert =O\left(\varepsilon+\overline{\epsilon}\right)$.
If $T_{3}<\infty$ then the fact that the bound on $Z$ over $\left[T_{2},T_{3}\right)$
is smaller than that over $\left[T_{1},T_{2}\right)$ is utilized
to show that $\left\Vert Y\left(t\right)\right\Vert \leq\varepsilon,$
for all $t\in\left[T_{2},T_{3}\right)$. The analysis is then continued
in an inductive argument to show that $\lim\sup_{t\to\infty}\left\Vert Z\left(t\right)\right\Vert =O\left(\varepsilon+\overline{\epsilon}\right)$
and $\left\Vert Y\left(t\right)\right\Vert \leq\varepsilon,$ for
all $t\in\left[T_{2},\infty\right)$.

The stability result is summarized in the following theorem.
\begin{thm}
Let $\varepsilon>0$ be given. Let the history stacks $\mathcal{H}$
and $\mathcal{G}$ be populated using the algorithm detailed in Fig.
\ref{alg:CLNoXDotpurgeDwell}. Let the learning gains be selected
to satisfy the sufficient gain conditions in (\ref{eq:V Gain Conditions 1}),
(\ref{eq:V gain condition}), (\ref{eq:W Gain Conditions 1}), and
(\ref{eq:Z0 condition}). Let $T\in\mathbb{R}_{>0}$ be a time instance
such that the system states are exciting over $\left[T_{0},T\right]$,
that is, the history stack can be replenished if purged at any time
$t\in\left[T_{0},T\right]$. Assume that over each switching interval
$\left\{ t\mid\rho\left(t\right)=s\right\} $, the dwell-time, $\mathcal{T}$,
is selected such that $\mathcal{T}\left(t\right)=\mathcal{T}_{s}$,
where%
\begin{comment}
$\mathcal{T}_{s}=T_{s}-T_{s-1}$
\end{comment}
{} $\mathcal{T}_{s}$ is selected to be large enough to satisfy (\ref{eq:Dwell}).
Furthermore assume that the excitation interval is large enough so
that $T_{2}<T$ .\footnote{A minimum of two purges are required to remove the randomly initialized
data, and the data recorded during transient phase of the derivative
estimator from the history stack.} Then, $\lim\sup_{t\to\infty}\left\Vert Z\left(t\right)\right\Vert =O\left(\varepsilon+\overline{\epsilon}\right)$.
\end{thm}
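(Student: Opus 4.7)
The plan is to build a composite Lyapunov function for the combined state $Z \triangleq \begin{bmatrix}\tilde{p}^T & r^T & \eta^T & \vecop(\tilde{\theta})^T\end{bmatrix}^T$ that captures the exponentially stabilizing structure of the $p$-filter observer through the filtered error $r$ together with the concurrent-learning driven contraction of $\tilde{\theta}$. A natural candidate is
\begin{equation*}
V(Z,t) = \tfrac{1}{2}\tilde{p}^T\tilde{p} + \tfrac{1}{2}r^T r + \tfrac{1}{2}\eta^T\eta + \tfrac{1}{2}\tr\!\left(\tilde{\theta}^T \Gamma^{-1}(t) \tilde{\theta}\right),
\end{equation*}
which, by (\ref{eq:StaFGammaBound}), is sandwiched between multiples of $\|Z\|^2$. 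Differentiating along (\ref{eq:DYNr}), (\ref{eq:Theta Error Dynamics}), (\ref{eq:Gamma Dynamics}), and using the local Lipschitz bounds $L_f, L_\sigma$, the bound $\|\theta\| \leq \overline{\theta}$, and the uniform lower bound $\underline{g}$ on $\lambda_{\min}\{\mathscr{G}_s\}$ afforded by Algorithm~\ref{alg:CLNoXDotpurgeDwell}, I would derive an ISS-type inequality of the form $\dot{V} \leq -c \|Z\|^2 + c'(\|Q_s\|^2 + \overline{\epsilon}^2)$ valid on each interval $[T_{s-1}, T_s)$ where $\mathscr{G}_s, Q_s$ are constant. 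The gain conditions (\ref{eq:V Gain Conditions 1}), (\ref{eq:V gain condition}), and (\ref{eq:W Gain Conditions 1}) are precisely what is needed to absorb the cross terms $\tilde{\theta}^T\tilde{\sigma}$, $\alpha^2\tilde{p}$, and $(k+\alpha)\eta$ appearing in (\ref{eq:DYNr}) after completing the squares.

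The main body of the argument then follows the interval-by-interval template telegraphed before the theorem statement. On $[T_0, T_1)$, the initial history stack $\mathcal{H}_1$ is arbitrary (but full rank), so integrating the ISS bound yields $\|Z(t)\| = O(\|Z(T_0)\| + \|\sum_i \mathcal{E}_{1i}\| + \overline{\epsilon})$; condition (\ref{eq:Z0 condition}) is invoked to guarantee $\|Y(T_1)\| < \varepsilon$. On $[T_1, T_2)$, the stack $\mathcal{H}_2$ was recorded during $[T_0, T_1)$, and by hypothesis its aggregated error is no larger than that of $\mathcal{H}_1$; combined with the dwell-time condition (\ref{eq:Dwell}), which forces the per-interval exponential decay of $V$ to outweigh the $\Gamma$-reset jump at $T_1$, the bound $\|Y(t)\| \leq \varepsilon$ is extended to $[T_1, T_2)$. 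On $[T_2, T_3)$, the crucial improvement materializes: because $\mathcal{H}_3$ was recorded during $[T_1, T_2)$ when $\|Y\| \leq \varepsilon$, (\ref{eq:ErrorBound}) yields $\|\mathcal{E}_{3i}\| = O(\varepsilon + \overline{\epsilon})$ uniformly in $i$, hence $\|Q_3\| = O(\varepsilon + \overline{\epsilon})$, and the ISS inequality forces $\|Z\|$ into an $O(\varepsilon + \overline{\epsilon})$ ball. A straightforward induction then propagates the pair of properties (i) $\|Y(t)\| \leq \varepsilon$ and (ii) $\|\sum_i \mathcal{E}_{si}\| = O(\varepsilon + \overline{\epsilon})$ from interval $s$ to interval $s+1$, covering both the $T_{s+1} < \infty$ and $T_{s+1} = \infty$ cases uniformly through the same Lyapunov bound.

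The delicate step, where I expect the real work to lie, is the switched-system bookkeeping at purging events. At each $T_s$ the active history stack $\mathscr{G}_s \to \mathscr{G}_{s+1}$ and the least-squares gain $\Gamma$ both change, so $V$ can jump upward; the dwell-time condition (\ref{eq:Dwell}) must be tight enough that the interior decay of $V$ dominates the cumulative jumps without shrinking so fast that $\mathcal{H}$ fails to be replenished within the excitation window $[T_0,T]$, which is where the assumption $T_2 < T$ enters. Quantifying the jump in $V$ in terms of $\overline{\Gamma}/\underline{\Gamma}$, reconciling it with the ISS decay rate $c$, and verifying that all four gain conditions can be simultaneously satisfied for the given $\varepsilon$, is the step that ties the argument together and justifies the final $\limsup_{t\to\infty}\|Z(t)\| = O(\varepsilon + \overline{\epsilon})$ bound.
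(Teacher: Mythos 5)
Your single composite Lyapunov function cannot close the argument, and this is precisely the obstruction the paper flags midway through its proof before introducing a second Lyapunov function. With $V$ alone, the ultimate bound on each interval is $\frac{\overline{v}}{v}\iota_{s}$ with $\iota_{s}=\frac{\overline{\epsilon}^{2}}{k}+\frac{3k_{\theta}^{2}}{2\underline{a}}\overline{Q}_{s}^{2}$, and on the first two intervals $\overline{Q}_{s}$ is inherited from the arbitrarily initialized stack $\mathcal{H}_{1}$ and from data recorded during the observer transient; it is a property of the stored data, not of the gains, so no gain selection makes it $O\left(\varepsilon\right)$. Consequently your claim that the composite ISS bound together with (\ref{eq:Z0 condition}) yields $\left\Vert Y\left(T_{1}\right)\right\Vert <\varepsilon$ and then $\left\Vert Y\left(t\right)\right\Vert \leq\varepsilon$ on $\left[T_{1},T_{2}\right)$ does not follow, yet this smallness of $Y$ during $\left[T_{1},T_{2}\right)$ is exactly what is needed for (\ref{eq:ErrorBound}) to give $\left\Vert \mathcal{E}_{3i}\right\Vert =O\left(\varepsilon+\overline{\epsilon}\right)$. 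The paper resolves this with the observer-only function $W\left(Y\right)$ in (\ref{eq:W}): there $\tilde{\theta}$ is treated as a disturbance bounded by the constant $\theta_{s}$ extracted from the $V$-analysis (see (\ref{eq:CLNoXDotThetaTildeBound})), absorbed via the $\theta_{1}^{2}$ term in (\ref{eq:W Gain Conditions 1}), leaving the residual $\lambda=\frac{\overline{\sigma}^{2}+\overline{\epsilon}^{2}}{2}$ whose amplified bound $\frac{\overline{w}}{w}\lambda$ \emph{is} tunable by $k$, $\alpha$, $\beta$ independently of the stack quality. The induction must then propagate two coupled families of bounds: $\theta_{s+1}\leq\theta_{s}$ so that (\ref{eq:W Gain Conditions 1}) stays valid, and $\overline{V}_{s+1}\leq\overline{V}_{s}$ so that (\ref{eq:V gain condition}) stays valid; the actual role of (\ref{eq:Z0 condition}) is to start this chain by forcing $\overline{V}_{2}<\overline{V}_{1}$ and hence $\theta_{2}<\theta_{1}$, not to guarantee $\left\Vert Y\left(T_{1}\right)\right\Vert <\varepsilon$ as you assert. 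This bootstrapping between the $V$- and $W$-analyses is the missing idea in your proposal.

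Your description of the switching bookkeeping is also off. There is no reset of $\Gamma$ and no upward jump of $V$ at a purge: $\tilde{p}$, $r$, $\eta$, $\tilde{\theta}$, and $\Gamma$ are all continuous, and only the piecewise-constant data $\mathscr{G}_{s}$ and $Q_{s}$ entering the error dynamics (\ref{eq:Theta Error Dynamics}) change. The dwell-time condition (\ref{eq:Dwell}) is therefore not about dominating jumps; it guarantees that within each interval both $V$ and $W$ have decayed to within twice their respective ultimate bounds before the stack is swapped, so the next interval can be entered with $\overline{V}_{s+1}=\frac{2\overline{v}\iota_{s}}{v}$ and $\overline{W}_{s+1}=\frac{2\overline{w}\lambda}{w}$, producing the monotone sequences that sustain the gain conditions through the induction and, once $T_{2}<T$ ensures two purges occur within the excitation window, deliver $\lim\sup_{t\to\infty}\left\Vert Z\left(t\right)\right\Vert =O\left(\varepsilon+\overline{\epsilon}\right)$.
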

\begin{proof}
Provided $\mathcal{H}_{1}$ is full rank, then the candidate Lyapunov
function 
\begin{equation}
2V\left(Z,t\right)\triangleq\alpha^{2}\tilde{p}^{T}\tilde{p}+r^{T}r+\eta^{T}\eta+\tr\left(\tilde{\theta}^{T}\Gamma^{-1}\left(t\right)\tilde{\theta}\right)\label{eq:V}
\end{equation}
can be utilized to establish boundedness of trajectories over $\left[T_{s-1},T_{s}\right)$.
The candidate Lyapunov function satisfies 
\begin{equation}
\underline{v}\left\Vert Z\right\Vert ^{2}\leq V\left(Z,t\right)\leq\overline{v}\left\Vert Z\right\Vert ^{2},\label{eq:CLNoXDotVBounds}
\end{equation}
where $\overline{v}\triangleq\frac{1}{2}\max\left\{ 1,\alpha^{2},\nicefrac{1}{\underline{\Gamma}}\right\} $
and $\underline{v}\triangleq\frac{1}{2}\min\left\{ 1,\alpha^{2},\nicefrac{1}{\overline{\Gamma}}\right\} $.
The time-derivative of $V$ along the trajectories of (\ref{eq:r}),
(\ref{eq:eta Update}), (\ref{eq:Gamma Dynamics}), (\ref{eq:Theta Error Dynamics}),
and (\ref{eq:DYNr}) is given by
\begin{gather*}
\dot{V}\!=-\alpha^{3}\tilde{p}^{T}\tilde{p}-kr^{T}r-\beta\eta^{T}\eta-\frac{1}{2}\tr\left(\!\tilde{\theta}^{T}\!\!\left(k_{\theta}\mathscr{G}_{1}+\beta_{1}\Gamma^{-1}\right)\!\tilde{\theta}\right)\\
+r^{T}\tilde{f^{o}}+r^{T}\theta^{T}\tilde{\sigma}+r^{T}\tilde{\theta}^{T}\hat{\sigma}+r^{T}\epsilon-k_{\theta}\tr\left(\tilde{\theta}^{T}Q_{s}\right).
\end{gather*}
Using the Cauchy-Schwartz inequality, the derivative can be bounded
as
\begin{gather*}
\dot{V}\!\leq-\alpha^{3}\left\Vert \tilde{p}\right\Vert ^{2}-k\left\Vert r\right\Vert ^{2}-\beta\left\Vert \eta\right\Vert ^{2}-\frac{1}{2}\underline{a}\left\Vert \tilde{\theta}\right\Vert ^{2}+L_{f}\left\Vert r\right\Vert \left\Vert \tilde{x}\right\Vert \\
+\left\Vert r\right\Vert \overline{\theta}L_{\sigma}\left\Vert \tilde{x}\right\Vert +\left\Vert r\right\Vert \left\Vert \tilde{\theta}\right\Vert \left\Vert \hat{\sigma}\right\Vert +\left\Vert r\right\Vert \overline{\epsilon}+k_{\theta}\left\Vert \tilde{\theta}\right\Vert \overline{Q}_{s},
\end{gather*}
where $\underline{a}=k_{\theta}\underline{g}+\frac{\beta_{1}}{\overline{\Gamma}},$
and $\overline{Q}_{s}$ is a positive constant such that $\overline{Q}_{s}\geq\left\Vert Q_{s}\right\Vert $.
Provided 
\begin{align}
k> & \max\left(2\left(4+\alpha\right)\left(L_{f}+\overline{\theta}L_{\sigma}\right),\frac{12\overline{\sigma}^{2}}{\underline{a}}\right),\nonumber \\
\alpha^{3}> & \left(1+\alpha\right)\left(L_{f}+\overline{\theta}L_{\sigma}\right),\nonumber \\
\beta> & \left(L_{f}+\overline{\theta}L_{\sigma}\right),\label{eq:V Gain Conditions 1}
\end{align}
Young's inequality and nonlinear damping can be used to conclude that
\begin{gather*}
\dot{V}\!\leq-\frac{\alpha^{3}}{2}\left\Vert \tilde{p}\right\Vert ^{2}-\frac{k}{4}\left\Vert r\right\Vert ^{2}-\frac{\beta}{2}\left\Vert \eta\right\Vert ^{2}-\frac{\underline{a}}{6}\left\Vert \tilde{\theta}\right\Vert ^{2}\\
-\left(\frac{k}{8}-\frac{3L_{\sigma}}{2\underline{a}}\left\Vert \tilde{x}\right\Vert ^{2}\right)\left\Vert r\right\Vert ^{2}+\frac{\overline{\epsilon}^{2}}{k}+\frac{3k_{\theta}^{2}}{2\underline{a}}\overline{Q}_{s}^{2},
\end{gather*}
 Since $\left\Vert \tilde{x}\right\Vert ^{2}\leq\left(1+\alpha\right)\left\Vert Z\right\Vert ^{2}$,
$\dot{V}\leq-\nu\left(\left\Vert Z\right\Vert -\frac{\iota_{s}}{\nu}\right)$,
in the domain 
\[
\mathcal{D}\triangleq\left\{ Z\in\r^{3n+np}\mid\left\Vert Z\right\Vert <\sqrt{\frac{k\underline{a}}{12L_{\sigma}\left(1+\alpha\right)}}\right\} .
\]
That is, $\dot{V}$ is negative definite on $\mathcal{D}$ provided
$\left\Vert Z\right\Vert >\sqrt{\frac{\iota_{s}}{\nu}}>0$, where
$v\triangleq\frac{1}{2}\min\left\{ \alpha^{3},\nicefrac{k}{2},\beta,\nicefrac{\underline{a}}{3}\right\} $
and $\iota_{s}\triangleq\frac{\overline{\epsilon}^{2}}{k}+\frac{3k_{\theta}^{2}}{2\underline{a}}\overline{Q}_{s}^{2}$.
Theorem 4.18 from \cite{SCC.Khalil2002} can then be invoked to conclude
that provided 
\begin{equation}
k>\frac{15L_{\sigma}\left(1+\alpha\right)}{\underline{a}\underline{v}}\max\left(\overline{V}_{s},\frac{\overline{v}\iota_{1}}{v}\right),\label{eq:V gain condition}
\end{equation}
where $\overline{V}_{s}\geq\left\Vert V\left(Z\left(T_{s-1}\right),T_{s-1}\right)\right\Vert $
is a constant, then $\dot{V}\leq-\frac{v}{\overline{v}}V+\iota_{s},$
$\forall t\in\left[T_{s-1},T_{s}\right)$.

In particular, $\forall t\in\left[T_{0},T_{1}\right),$
\begin{equation}
V\left(Z\left(t\right),t\right)\leq\left(\overline{V}_{1}-\frac{\overline{v}}{v}\iota_{1}\right)\e^{-\frac{v}{\overline{v}}\left(t-T_{0}\right)}+\frac{\overline{v}}{v}\iota_{1},\label{eq:CLNoXDotVDecay}
\end{equation}
where $\overline{V}_{1}>0$ is a constant such that $\left|V\left(Z\left(T_{0}\right),T_{0}\right)\right|\leq\overline{V}_{1}$.
Hence, $\forall t\in\left[T_{0},T_{1}\right),$
\begin{equation}
\left\Vert \tilde{\theta}\left(t\right)\right\Vert \leq\theta_{1}\triangleq\sqrt{\frac{1}{\underline{v}}}\max\left\{ \sqrt{\overline{V}_{1}},\sqrt{\frac{\overline{v}}{v}\iota_{1}}\right\} .\label{eq:CLNoXDotThetaTildeBound}
\end{equation}

If it were possible to use the inequality in (\ref{eq:CLNoXDotVDecay})
to conclude that over $\left[T_{0},T_{1}\right)$, $V\left(Z\left(t\right),t\right)\leq V\left(Z\left(T_{0}\right),T_{0}\right)$,
then an inductive argument could be used to show that the trajectories
decay to a neighborhood of the origin. However, unless the history
stack can be selected to have arbitrarily large minimum singular value
(which is generally not possible), the constant $\frac{\overline{v}}{v}\iota_{1}$
cannot be made arbitrarily small using the learning gains.

Since $\iota_{s}$ depends on $Q_{s}$, it can be made smaller by
reducing the estimation errors and thereby reducing the errors associated
with the data stored in the history stack. To that end, consider the
candidate Lyapunov function 
\begin{equation}
W\left(Y\right)\triangleq\frac{\alpha^{2}}{2}\tilde{p}^{T}\tilde{p}+\frac{1}{2}r^{T}r+\frac{1}{2}\eta^{T}\eta.\label{eq:W}
\end{equation}
The candidate Lyapunov function satisfies 
\begin{equation}
\underline{w}\left\Vert Y\right\Vert ^{2}\leq W\left(Y,t\right)\leq\overline{w}\left\Vert Y\right\Vert ^{2},\label{eq:CLNoXDotVBounds-1}
\end{equation}
where $\overline{w}\triangleq\frac{1}{2}\max\left\{ 1,\alpha^{2}\right\} $,
$\underline{w}\triangleq\frac{1}{2}\min\left\{ 1,\alpha^{2}\right\} $.
In the interval $\left[T_{s-1},T_{s}\right)$, the time-derivative
of $W$ is given by 
\begin{multline*}
\dot{W}=-\alpha^{3}\tilde{p}^{T}\tilde{p}-kr^{T}r-\beta\eta^{T}\eta+r^{T}\left(\tilde{f^{o}}+\left(\theta^{T}-\tilde{\theta}^{T}\right)\tilde{\sigma}\right)\\
+r^{T}\left(\tilde{\theta}^{T}\sigma+\epsilon\right)
\end{multline*}
Using the Cauchy-Schwartz inequality, the derivative $\dot{W}$ can
be bounded as
\begin{align*}
\dot{W}= & -\alpha^{3}\left\Vert \tilde{p}\right\Vert ^{2}-k\left\Vert r\right\Vert ^{2}-\beta\left\Vert \eta\right\Vert ^{2}\\
 & +\left(L_{f}+\left(\overline{\theta}+\theta_{s}\right)L_{\sigma}\right)\left\Vert r\right\Vert \left\Vert \tilde{x}\right\Vert +\left(\theta_{s}\overline{\sigma}+\overline{\epsilon}\right)\left\Vert r\right\Vert ,
\end{align*}
where $\theta_{s}>0$ is a constant such that $\theta_{s}\geq\sup_{t\in\left[T_{s-1},T_{s}\right)}\left\Vert \tilde{\theta}\left(t\right)\right\Vert $.

Consider the time interval $\left[T_{0},T_{1}\right)$. Provided
\begin{align}
k & \geq1+\theta_{1}^{2}+\left(L_{f}+\left(\overline{\theta}+\theta_{1}\right)L_{\sigma}\right)\left(4+\alpha\right)\nonumber \\
\alpha^{3} & \geq\left(1+\alpha\right)\left(L_{f}+\left(\overline{\theta}+\theta_{1}\right)L_{\sigma}\right)\nonumber \\
\beta & \geq\left(L_{f}+\left(\overline{\theta}+\theta_{1}\right)L_{\sigma}\right)\label{eq:W Gain Conditions 1}
\end{align}
then $\dot{W}\leq-\frac{w}{\overline{w}}W+\lambda,$ where $w=\frac{1}{2}\min\left(\alpha^{3},k,\beta\right)$
and $\lambda=\frac{\overline{\sigma}^{2}+\overline{\epsilon}^{2}}{2}$.
That is, for all $t\in\left[T_{0},T_{1}\right),$
\begin{equation}
W\left(Y\left(t\right),t\right)\leq\left(\overline{W}_{1}-\frac{\overline{w}}{w}\lambda\right)\e^{-\frac{w}{\overline{w}}\left(t-T_{0}\right)}+\frac{\overline{w}}{w}\lambda,\label{eq:CLNoXDotVDecay-1}
\end{equation}
where $\overline{W}_{1}>0$ is a constant such that $\left|W\left(Y\left(T_{0}\right)\right)\right|\leq\overline{W}_{1}$.
In particular, $\forall t\in\left[T_{0},T_{1}\right).$ 
\begin{equation}
\left\Vert Y\left(t\right)\right\Vert \leq\sqrt{\frac{1}{\underline{w}}\max\left(\overline{W}_{1},\frac{\overline{w}}{w}\lambda\right)}\triangleq\overline{\left\Vert Y\right\Vert }_{1}.\label{eq:YBound}
\end{equation}
Provided the dwell time $\mathcal{T}_{s}$ is large enough so that
\begin{align}
\left(\overline{W}_{s}-\frac{\overline{w}}{w}\lambda\right)e^{-\frac{w}{\overline{w}}\mathcal{T}_{s}} & \leq\frac{\overline{w}}{w}\lambda,\nonumber \\
\left(\overline{V}_{s}-\frac{\overline{v}}{v}\iota_{s}\right)e^{-\frac{v}{\overline{v}}\mathcal{T}_{s}} & \leq\frac{\overline{v}}{v}\iota_{s},\label{eq:Dwell}
\end{align}
then from (\ref{eq:CLNoXDotVDecay}) and (\ref{eq:CLNoXDotVDecay-1}),
$W\left(Y\left(T_{1}\right)\right)\leq\frac{2\overline{w}\lambda}{w}$
and $V\left(Z\left(T_{1}\right),T_{1}\right)\leq\frac{2\overline{v}\iota_{1}}{v}$.
In particular, $\left\Vert Y\left(T_{1}\right)\right\Vert \leq\sqrt{\frac{2\overline{w}\lambda}{\underline{w}w}}$
and $\left\Vert Z\left(T_{1}\right)\right\Vert \leq\sqrt{\frac{2\overline{v}\iota_{1}}{\underline{v}v}}$.
Note that the bound on $Y\left(T_{1}\right)$ can be made arbitrarily
small by increasing $k,$ $\alpha,$ and $\beta$.

Now the interval $\left[T_{1},T_{2}\right)$ is considered. Since
the history stack $\mathcal{H}_{2}$ which is active during $\left[T_{1},T_{2}\right)$
is recorded during $\left[T_{0},T_{1}\right)$, the bound in (\ref{eq:ErrorBound})
can be used to show that $\overline{Q}_{2}=O\left(\overline{\left\Vert Y\right\Vert }_{1}+\overline{\epsilon}\right)$.

Since $\mathcal{H}_{1}$ is independent of the system trajectories,
$\overline{Q}_{1}$ can be selected such that $\overline{Q}_{2}<\overline{Q}_{1}$,
and hence, $\iota_{2}<\iota_{1}$. Thus, provided the constant $\overline{V}_{1}$
(and as a result, the gain $k$) is selected large enough so that
\begin{equation}
\frac{2\overline{v}\iota_{1}}{v}<\overline{V}_{1},\label{eq:Z0 condition}
\end{equation}
the gain condition in (\ref{eq:V gain condition}) holds over $\left[T_{1},T_{2}\right)$,
and hence, a similar Lyapunov-based analysis, along with the bound
$\overline{V}_{2}=\frac{2\overline{v}\iota_{1}}{v}$ can be utilized
to conclude that $\forall t\in\left[T_{1},T_{2}\right)$,
\begin{equation}
\left\Vert \tilde{\theta}\left(t\right)\right\Vert \leq\sqrt{\frac{\overline{v}}{\underline{v}v}}\max\left\{ \sqrt{2\iota_{1}},\sqrt{\iota_{2}}\right\} \triangleq\theta_{2}.\label{eq:CLNoXDotThetaTildeBound-1}
\end{equation}
The sufficient condition in (\ref{eq:Z0 condition}) implies that
$\overline{V}_{2}<\overline{V}_{1}$ and hence, (\ref{eq:CLNoXDotThetaTildeBound})
and $\iota_{2}<\iota_{1}$ imply that $\theta_{2}<\theta_{1}$.

Since $\theta_{2}<\theta_{1}$, the gain conditions in (\ref{eq:W Gain Conditions 1})
hold over the interval $\left[T_{1},T_{2}\right)$. A Lyapunov-based
analysis similar to (\ref{eq:W})-(\ref{eq:YBound}) yields $\left\Vert Y\left(t\right)\right\Vert \leq\sqrt{\frac{1}{\underline{w}}\max\left(\overline{W}_{2},\frac{\overline{w}}{w}\lambda\right)}.$
From (\ref{eq:Dwell}), $\overline{W}_{2}=\frac{2\overline{w}\lambda}{w}$,
and hence, $\forall t\in\left[T_{1},T_{2}\right)$,
\begin{equation}
\left\Vert Y\left(t\right)\right\Vert \leq\sqrt{\frac{2\overline{w}\lambda}{\underline{w}w}}\triangleq\overline{\left\Vert Y\right\Vert }_{2}.\label{eq:YBound-1}
\end{equation}

Now, the interval $\left[T_{2},T_{3}\right)$ is considered. Since
the history stack $\mathcal{H}_{3}$ which is active during $\left[T_{2},T_{3}\right)$
is recorded during $\left[T_{1},T_{2}\right)$, the bounds in (\ref{eq:ErrorBound})
and (\ref{eq:YBound-1}) can be used to show that $\overline{Q}_{3}=O\left(\overline{\left\Vert Y\right\Vert }_{2}+\overline{\epsilon}\right).$
By selecting $\overline{W}_{1}$ large enough, it can be ensured that
$\overline{\left\Vert Y\right\Vert }_{2}<\overline{\left\Vert Y\right\Vert }_{1}$,
and hence, $\overline{Q}_{3}<\overline{Q}_{2}$, which implies $\iota_{3}<\iota_{2}$.
Provided $\mathcal{T}_{2}$ satisfies (\ref{eq:Dwell}), then $\left(\overline{V}_{2}-\frac{\overline{v}}{v}\iota_{2}\right)e^{-\frac{v}{\overline{v}}\left(T_{2}-T_{1}\right)}\leq\frac{\overline{v}}{v}\iota_{2}$,
which implies $\overline{V}_{3}=\frac{\overline{v}}{v}\iota_{2}$,
and hence, $\overline{V}_{3}<\overline{V}_{2}$ and $\theta_{3}<\theta_{2}$.
Therefore, the gain conditions in (\ref{eq:V Gain Conditions 1}),
(\ref{eq:V gain condition}), and (\ref{eq:W Gain Conditions 1})
are satisfied over $\left[T_{2},T_{3}\right)$. 

Since the gain conditions are satisfied, a Lyapunov-based analysis
similar to (\ref{eq:W})-(\ref{eq:YBound}) yields $\left\Vert Y\left(t\right)\right\Vert \leq\sqrt{\frac{2\overline{w}\lambda}{\underline{w}w}},\forall t\in\left[T_{2},T_{3}\right)$.
Given any $\varepsilon>0,$ the gains $\alpha,$ $\beta,$ and $k$
can be selected large enough to satisfy $\overline{\left\Vert Y\right\Vert }_{2}\leq\varepsilon,$
and hence, $\left\Vert Y\left(t\right)\right\Vert \leq\varepsilon,\forall t\in\left[T_{2},T_{3}\right).$
Furthermore, a similar Lyapunov-based analysis as (\ref{eq:V}) -
(\ref{eq:CLNoXDotVDecay}) yields $V\left(Z\left(t\right),t\right)\leq\left(\overline{V}_{3}-\frac{\overline{v}}{v}\iota_{3}\right)\e^{-\frac{v}{\overline{v}}\left(t-T_{2}\right)}+\frac{\overline{v}}{v}\iota_{3},\forall t\in\left[T_{2},T_{3}\right)$.
If $T_{3}=\infty$ then $\lim\sup_{t\to\infty}V\left(Z\left(t\right),t\right)\leq\frac{\overline{v}}{v}\iota_{3}$,
which, from $\overline{Q}_{3}=O\left(\overline{\left\Vert Y\right\Vert }_{2}+\overline{\epsilon}\right)$
and $\iota_{3}=\frac{\overline{\epsilon}^{2}}{k}+\frac{3k_{\theta}^{2}}{2\underline{a}}\overline{Q}_{3}^{2}$
implies that $\lim\sup_{t\to\infty}\left\Vert Z\left(t\right)\right\Vert =O\left(\varepsilon+\overline{\epsilon}\right)$.

If $T_{3}\neq\infty$ then an inductive continuation of the Lyapunov-based
analysis to the time intervals $\left[T_{s-1},T_{s}\right)$ shows
that provided the dwell time $\mathcal{T}_{s}$ satisfies (\ref{eq:Dwell}),
the gain conditions in (\ref{eq:V Gain Conditions 1}), (\ref{eq:V gain condition}),
and (\ref{eq:W Gain Conditions 1}) are satisfied for all $t>T_{3}$,
the state $Y$ satisfies 
\begin{equation}
\left\Vert Y\left(t\right)\right\Vert \leq\varepsilon,\forall t>T_{1},\label{eq:Y}
\end{equation}
and $Q_{s}\leq Q_{s-1}$, $\iota_{s}\leq\iota_{s-1}$, $\overline{V}_{s}\leq\overline{V}_{s-1}$,
and $\overline{\theta}_{s}\leq\overline{\theta}_{s-1}$, for all $s>3$. 

The bound in (\ref{eq:Y}) and the fact that $\overline{Q}_{s}=O\left(\overline{\left\Vert Y\right\Vert }_{s-1}+\overline{\epsilon}\right)$
indicate that $\overline{Q}_{s}=O\left(\varepsilon+\overline{\epsilon}\right),\forall s\in\mathbb{N}$.
Furthermore, $V\left(Z\left(t\right),t\right)\leq\left(\overline{V}_{s}-\frac{\overline{v}}{v}\iota_{s}\right)\e^{-\frac{v}{\overline{v}}\left(t-T_{s-1}\right)}+\frac{\overline{v}}{v}\iota_{s}$,
$\forall t\in\left[T_{s-1},T_{s}\right)$, $\forall s\in\mathbb{N}$,
which, along with the dwell time requirement, implies that $\lim\sup_{t\to\infty}V\left(Z\left(t\right),t\right)\leq\frac{\overline{v}}{v}\iota_{s}$,
and hence, $\lim\sup_{t\to\infty}\left\Vert Z\left(t\right)\right\Vert =O\left(\varepsilon+\overline{\epsilon}\right)$.
\end{proof}

\section{\label{sec:Simulation}Simulation}

\begin{table}
\caption{\label{tab:Simulation-parameters-for}Simulation parameters for the
different simulation runs. The parameters are selected using trial
and error.}

\begin{tabular}{>{\centering}p{0.35\columnwidth}>{\centering}p{0.25\columnwidth}>{\centering}p{0.25\columnwidth}}
\toprule 
 & \multicolumn{2}{c}{Noise Variance}\tabularnewline
Parameter & 0 & 0.001\tabularnewline
\midrule
$T_{1}$ & 0.5 & 0.9\tabularnewline
$T_{2}$ & 0.3 & 0.5\tabularnewline
$N$ & 50 & 150\tabularnewline
$\Gamma\left(t_{0}\right)$ & $\id_{4}$ & $\id_{4}$\tabularnewline
$\beta_{1}$ & 0.5 & 0.5\tabularnewline
$\alpha$ & 2 & 2\tabularnewline
$k$ & 10 & 10\tabularnewline
$\beta$ & 2 & 2\tabularnewline
$\zeta$ & 0 & 0\tabularnewline
$\xi$ & 0.95 & 0.95\tabularnewline
$k_{\theta}$ & \nicefrac{0.5}{N} & \nicefrac{0.5}{N}\tabularnewline
\bottomrule
\end{tabular}
\end{table}
\begin{figure}
\includegraphics[width=0.9\columnwidth]{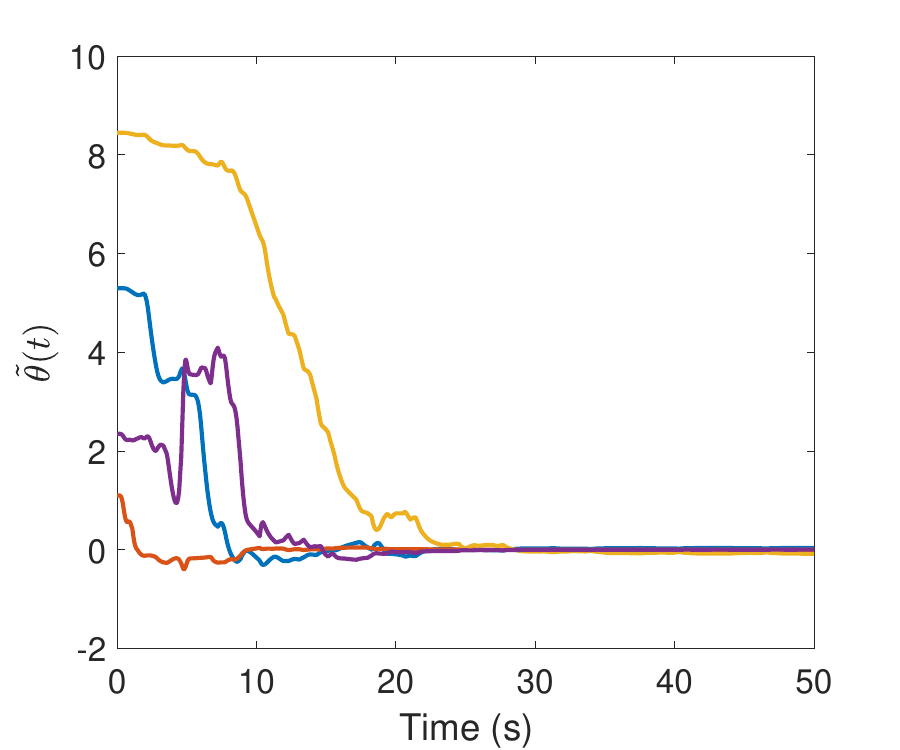}

\caption{\label{fig:ThetaTildeNoNoise}Trajectories of the parameter estimation
errors using noise-free position measurements.}
\end{figure}
\begin{figure}
\includegraphics[width=0.9\columnwidth]{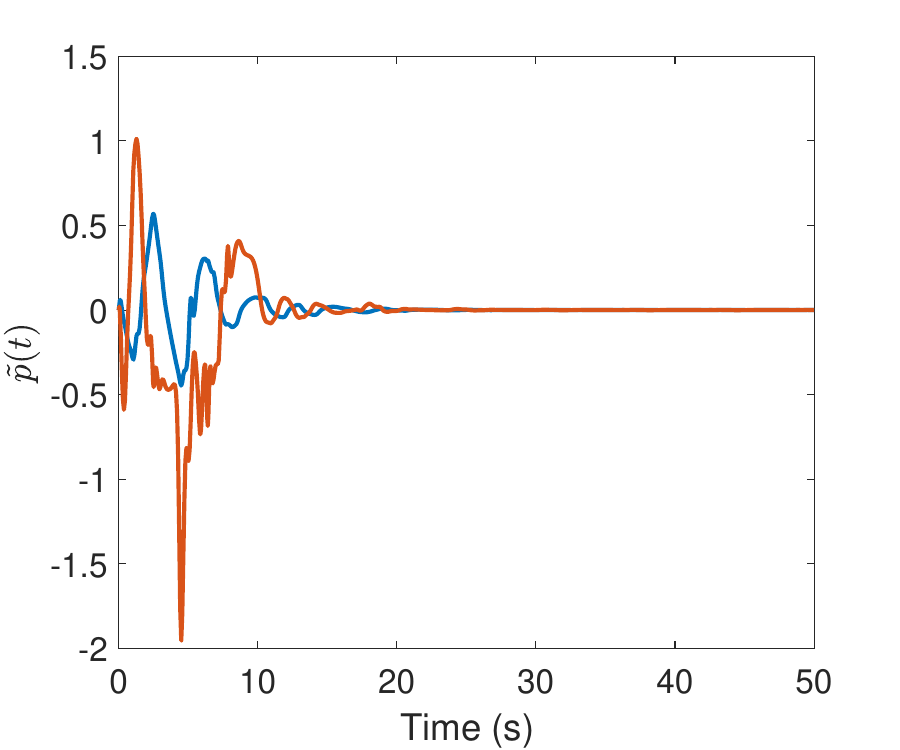}

\caption{\label{fig:pTildeNoNoise}Trajectories of the generalized position
estimation errors using noise-free position measurements.}
\end{figure}
\begin{figure}
\includegraphics[width=0.9\columnwidth]{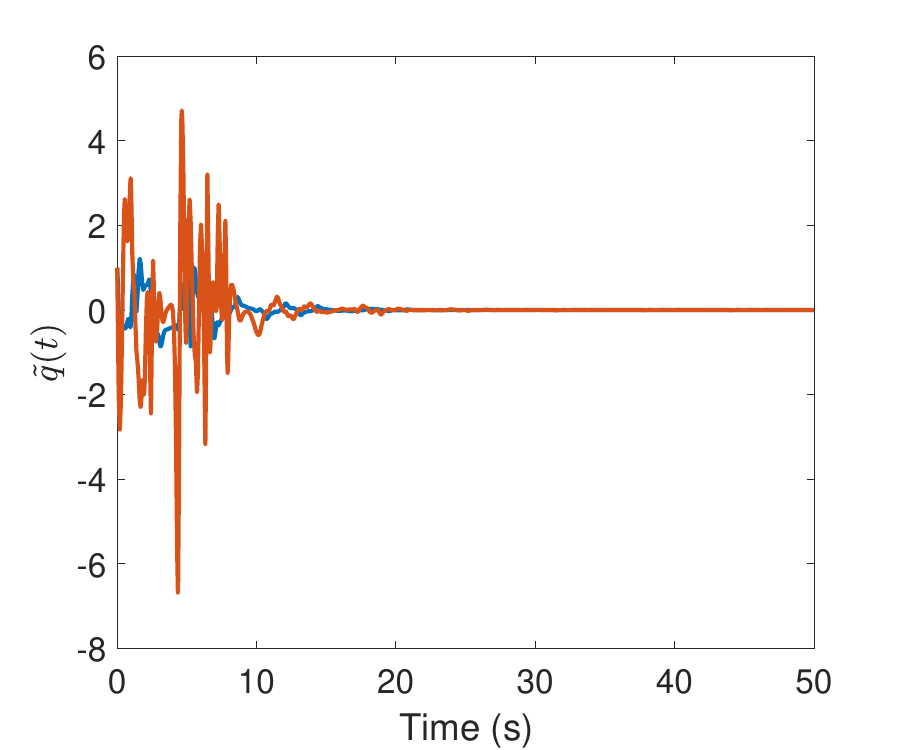}

\caption{\label{fig:qTildeNoNoise}Trajectories of the generalized velocity
estimation errors using noise-free position measurements.}
\end{figure}
\begin{figure}
\includegraphics[width=0.9\columnwidth]{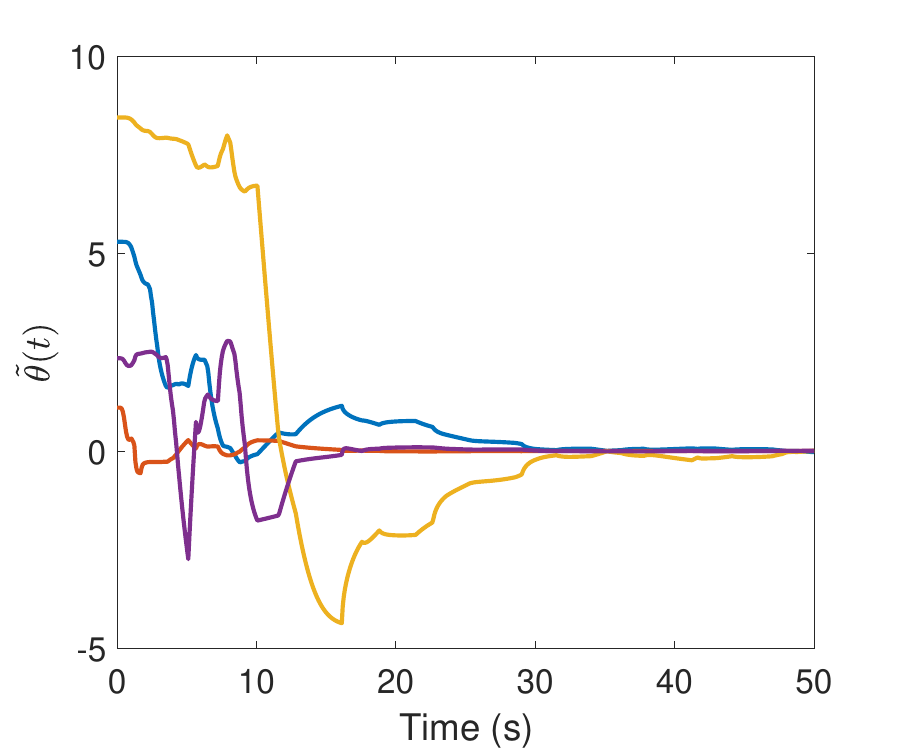}

\caption{\label{fig:ThetaTilde0.001}Trajectories of the parameter estimation
errors with a Gaussian measurement noise (variance = 0.001).}
\end{figure}
\begin{figure}
\includegraphics[width=0.9\columnwidth]{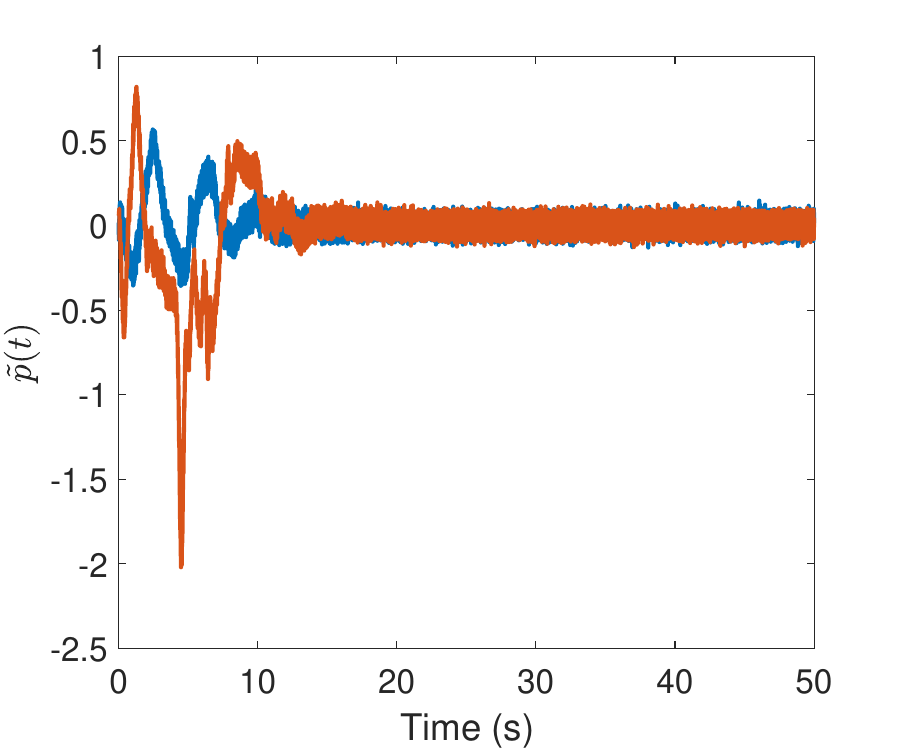}

\caption{\label{fig:pTilde0.001}Trajectories of the generalized position estimation
errors with a Gaussian measurement noise (variance = 0.001).}
\end{figure}
\begin{figure}
\includegraphics[width=0.9\columnwidth]{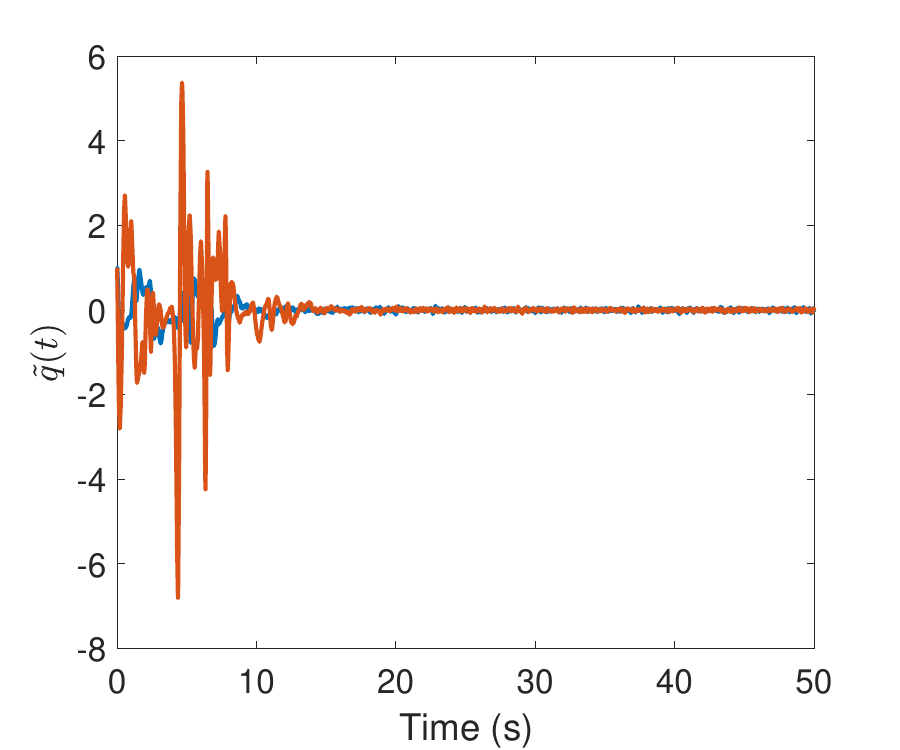}

\caption{\label{fig:qTilde0.001}Trajectories of the generalized velocity estimation
errors with a Gaussian measurement noise (variance = 0.001).}
\end{figure}

The developed technique is simulated using a model for a two-link
robot manipulator arm. The uncertainty $g\left(x,u\right)$ is linearly
parameterizable as $g^{T}\left(x,u\right)=\theta^{T}\sigma\left(x,u\right).$
That is, the selected model belongs to a sub-class of systems defined
by (\ref{eq:Linear System}), where the function approximation error,
$\varepsilon$, is zero. Since the ideal parameters, $\theta$, are
uniquely known, the selected model facilitates quantitative analysis
of the parameter estimation error. The dynamics of the arm are described
by (\ref{eq:Linear System}), where
\begin{gather}
f^{0}\left(x,u\right)=-\left(M\left(p\right)\right)^{-1}V_{m}\left(p,q\right)q+\left(M\left(p\right)\right)^{-1}u,\nonumber \\
g^{T}\left(x,u\right)=\theta^{T}\begin{bmatrix}\left[\begin{array}{cc}
\left(M\left(p\right)\right)^{-1} & \left(M\left(p\right)\right)^{-1}\end{array}\right]D\left(q\right)\end{bmatrix}^{T}.\label{eq:CLNoXDotSymDyn}
\end{gather}
In (\ref{eq:CLNoXDotSymDyn}), $u\in\mathbb{R}^{2}$ is the control
input, $D\left(q\right)\triangleq\mbox{diag}\left[\tanh\left(q_{1}\right),\:\tanh\left(q_{2}\right)\right]$,
$M\left(p\right)\triangleq\begin{bmatrix}p_{1}+2a_{3}\textnormal{c}_{2}\left(p\right), & a_{2}+a_{3}\textnormal{c}_{2}\left(p\right)\\
a_{2}+a_{3}\textnormal{c}_{2}\left(p\right), & a_{2}
\end{bmatrix},$ and $V_{m}\left(p,q\right)\triangleq\begin{bmatrix}-a_{3}\textnormal{s}_{2}\left(p\right)q_{2}, & -a_{3}\textnormal{s}_{2}\left(p\right)\left(q_{1}+q_{2}\right)\\
a_{3}\textnormal{s}_{2}\left(p\right)q_{1}, & 0
\end{bmatrix},$ where $\textnormal{c}_{2}\left(p\right)=\cos\left(p_{2}\right),$
$\textnormal{s}_{2}\left(p\right)=\sin\left(p_{2}\right)$, and $a_{1}=3.473$,
$a_{2}=0.196$, and $a_{3}=0.242$ are constants. The system has four
unknown parameters. The ideal values of the unknown parameters are
$\theta=\begin{bmatrix}5.3 & 1.1 & 8.45 & 2.35\end{bmatrix}^{T}.$ 

The contribution of this paper is the design of a parameter estimator
and a velocity observer. The controller is assumed to be any controller
that results in bounded system response. In this simulation study,
the controller, $u$, is designed so that the system tracks the trajectory
$p_{1}\left(t\right)=p_{2}\left(t\right)=\sin\left(3t\right)+\sin\left(2t\right)$. 

The simulation is performed using Euler forward numerical integration
using a sample time of $T_{s}=0.0005$ seconds. Past $\frac{\tau_{1}+\tau_{2}}{T_{s}}$
values of the generalized position, $p$, and the control input, $u$,
are stored in a buffer. The matrices $P$, $\hat{G}$, and $\hat{F}$
for the parameter update law in (\ref{eq:Theta Dynamics}) are computed
using trapezoidal integration of the data stored in the aforementioned
buffer. Values of $P$, $\hat{G}$, and $\hat{F}$ are stored in the
history stack and are updated according to the algorithm detailed
in Fig. \ref{alg:CLNoXDotpurgeDwell}.

The initial estimates of the unknown parameters are selected to be
zero, and the history stack is initialized so that all the elements
of the history stack are zero. Data is added to the history stack
using a singular value maximization algorithm. To demonstrate the
utility of the developed method, three simulation runs are performed.
In the first run, the observer is assumed to have access to noise
free measurements of the generalized position. In the second run,
a zero-mean Gaussian noise with variance 0.001 is added to the generalized
position signal to simulate measurement noise. The values of various
simulation parameters selected for the three runs are provided in
Table \ref{tab:Simulation-parameters-for}. Figure \ref{fig:ThetaTildeNoNoise}
demonstrates that in absence of noise, the developed parameter estimator
drives the state estimation error, $\tilde{x}$, and the parameter
estimation error, $\tilde{\theta}$, close to the origin. Figures
\ref{fig:ThetaTilde0.001} - \ref{fig:qTilde0.001} indicate that
the developed technique can be utilized in the presence of measurement
noise, with expected degradation of performance.

\section{\label{sec:Conclusion}Conclusion}

This paper develops a concurrent learning based adaptive observer
and parameter estimator to simultaneously estimate the unknown parameters
and the generalized velocity of second-order nonlinear systems using
generalized position measurements. The developed technique utilizes
a dynamic velocity observer to generate state estimates necessary
for data-driven adaptation. A purging algorithm is developed to improve
the quality of the stored data as the state estimates converge to
the true state. By integrating $n-$times, the developed method can
be generalized to higher-order linear systems.

Simulation results indicate that the developed method is robust to
measurement noise. A theoretical analysis of the developed method
under measurement noise and process noise is a subject for future
research. Future efforts will also focus on the examination the effect
of the integration intervals, $\tau_{1}$ and $\tau_{2}$, on the
performance of the developed estimator. \bibliographystyle{IEEEtran}
\bibliography{encr,sccmaster,scc}

\end{document}